\newtheorem{theorem}{Theorem}{}
{}
{}
\newtheorem{lemma}{Lemma}{}
\pgfplotsset{compat=1.15}
\tikzstyle{startstop} = [rectangle, rounded corners, minimum width=1cm, minimum height=1cm,text centered, draw=black, fill=black!00] 
\tikzstyle{i} = [trapezium, trapezium left angle=70, trapezium right angle=110, minimum width=3cm, minimum height=1cm, text centered, text width=6cm, draw=black, fill=red!30]
\tikzstyle{o} = [trapezium, trapezium left angle=70, trapezium right angle=110, minimum width=3cm, minimum height=1cm, text centered, text width=4.5cm, draw=black, fill=green!30]
\tikzstyle{process} = [rectangle, minimum width=1cm, minimum height=1cm,text centered, text width=1cm ,  draw=blue, fill=black!00]
\tikzstyle{decision} = [diamond, minimum width=3cm, minimum height=0.5cm, text centered, inner sep=1pt, draw=blue, fill=violet!00]
\tikzstyle{joint} = [draw=black,circle,text centered,minimum width=1cm, minimum height=1cm,fill=yellow!20]
\tikzstyle{arrow} = [->,>=stealth,draw= blue]
\tikzset{
connector/.style = {draw,circle,minimum width=1cm, minimum height=1cm, text centered,fill=yellow!20}
}
\tikzset{block/.style={rectangle split, draw, rectangle split parts=2, text badly centered, text width = 4cm, font=\fontsize{10}{0}\selectfont},   
line/.style={draw, -{Latex[length=2mm,width=1mm]}},
cloud/.style={draw, ellipse,fill=white!20, node distance=3cm,    minimum height=4em},  
container/.style={draw, rectangle,dashed,inner sep=0.28cm, rounded
corners,fill=yellow!20,minimum height=1.2cm}}
\newdimen\XCoord
\newdimen\YCoord
\newcommand{\DistFlow}{DistFlow\ }
\newcommand{\diag}{\operatorname{\mathbf{diag}}}
\newcommand{\lyapnouv}{Lyapunov\ }
\newcommand{\lipschitz}{Lipschitz\ }
\begin{document}

\supertitle{Research Article}

\title{Lyapunov Stability of Smart Inverters Using Linearized DistFlow Approximation}

\author{\au{Shammya Shananda Saha$^{1,*}$} \au{Daniel Arnold$^{2}$} \au{Anna Scaglione$^1$} \au{Eran Schweitzer$^1$} \au{Ciaran Roberts$^2$} \au{Sean Peisert$^2$} \au{Nathan G. Johnson$^3$} }

\address{\add{1}{School of Electrical, Computer and Energy Engineering, Arizona State University, Tempe, Arizona.}
\add{2}{Lawrence Berkeley National Lab, Berkeley, California.}
\add{3}{The Polytechnic School, Arizona State University, Mesa, Arizona.}
\email{shammya.saha@asu.edu}}

\begin{abstract}
\looseness=-1 \textcolor{blue}{Fast-acting smart inverters that utilize preset operating conditions to determine real and reactive power injection/consumption can create voltage instabilities (over-voltage, voltage oscillations and more) in an electrical distribution network if set-points are not properly configured.} In this work, linear distribution power flow equations and droop-based Volt-Var and Volt-Watt control curves are used to analytically derive a stability criterion using \lyapnouv analysis that includes the network operating condition. The methodology is generally applicable for control curves that can be represented as \lipschitz functions. The derived \lipschitz constants account for smart inverter hardware limitations for reactive power generation. A local policy is derived from the stability criterion that allows inverters to adapt their control curves by monitoring only local voltage, thus avoiding centralized control or information sharing with other inverters. \textcolor{blue}{The criterion is independent of the internal time-delays of smart inverters.} Simulation results for inverters with and without the proposed stabilization technique demonstrate how smart inverters can mitigate voltage oscillations locally and mitigate real and reactive power flow disturbances at the substation under multiple scenarios. The study concludes with illustrations of how the control policy can dampen oscillations caused by solar intermittency and cyber-attacks.
\end{abstract}

\maketitle
\section*{\textcolor{blue}{Nomenclature}}
\begin{table}[htbp]
\resizebox{\columnwidth}{!}{%
{\begin{tabular*}{20pc}{@{\extracolsep{\fill}}lll@{}}
$\mathcal{L}$ & Set of all lines\\ 
$\mathcal{N}$ & Set of all nodes\\ 
$p_i^c$, $p_i^g$ & \makecell[l]{Real power demand, Real power generation at \\ node $i$} \\ 
$q_i^c$, $q_i^g$ & \makecell[l]{Reactive power demand, Real power generation \\ at node $i$}\\ 
$r_{ij},x_{ij}$ & \makecell[l]{Line resistance, Line reactance of line between \\ nodes $i$ and $j$}\\ 
$P_{ij},Q_{ij}$ & \makecell[l]{Real power flow, Reactive power flow of line between \\ nodes $i$ and $j$}\\ 
$v_i$ & Voltage of node $i$\\ 
$c_{ij}$ & Current of line between nodes $i$ and $j$\\ 
$\textcolor{blue}{s_{i}}$ & \textcolor{blue}{ Rated apparent power of inverter at node $i$} \\
$\textcolor{blue}{\overline{p_{i}}}$ & \textcolor{blue}{Maximum real power output of inverter at node $i$ }\\
\textcolor{blue}{$q_i^{lim}$} & \textcolor{blue}{\makecell[l]{Hardware limit of reactive power generation of inverter \\ at node $i$}} \\ 
\textcolor{blue}{$f_{p,i}$} & \textcolor{blue}{Volt-Watt control function of inverter at node $i$} \\ 
\textcolor{blue}{$f_{q,i}$} & \textcolor{blue}{Volt-Var control function of inverter at node $i$} \\ 
\textcolor{blue}{$C_{p,i}$} & \textcolor{blue}{\makecell[l]{\lipschitz constant for Volt-Watt control function  of \\ inverter at node $i$}} \\ 
\textcolor{blue}{$C_{q,i}$} & \textcolor{blue}{\makecell[l]{\lipschitz constant for Volt-Var control function of \\ inverter at node $i$} }\\ 
\end{tabular*}}{}
}
\end{table}

\section{Introduction}
\label{sec:introduction}
\subsection{\textcolor{blue}{Motivation}}
\textcolor{blue}{The fast-acting and complex control mechanisms of Distributed Energy Resources (DER) pose challenges for planning, operations, and reliability of electric distribution systems and microgrids with increasing amounts of DER. Poor DER coordination can create reliability issues even at low penetration levels \cite{KHALILI201992}. At higher penetrations, the net impact of many solar photovoltaic (PV) generators may accumulate and further affect power quality \cite{Hu2019,Singhal2019}.} Power quality control devices could be added to the distribution network to counteract the problem, but the additional cost can be avoided if the problem can be resolved by the solar inverters. One such approach involves using PV inverter capacity to control power (real and reactive) generation/consumption as a means to stabilize voltage swings occurring from changes in load, generation, equipment failure, or cyber-attack. This functionality can be implemented within the allowable scope of interconnection standards \cite{IEEE_1547,rule21} using only local information at the inverter without the need for additional data monitoring and control equipment to be installed on the network. 

\subsection{\textcolor{blue}{Related Literature}}
\label{sec:background}
Numerous works have studied power quality issues for inverter-dominated distribution networks and recommended mitigation strategies for over-voltage, voltage oscillation, and other issues. Methods to perform voltage control by PV inverters can be categorized by the control approach (centralized or local), control function (Volt-Var, combined Volt-Var, and Volt-Watt, none) \cite{Vijayan2019}, and level of detail to which inverter hardware limits are expressed in the control function.

In a centralized approach, a utility or aggregator collects real-time information about the network, processes that information using an optimization formulation (commonly an Optimal Power Flow (OPF) formulation), and sends updated set-points back to the inverters. Such an approach was taken in \cite{bakerNetwork2017} to determine settings of inverters that ensured voltage stability across the entire distribution network, but that approach required the centralized utility to exchange information with inverters and this restricted updates to every 5-15 minutes. Similar approaches could be suitable for smaller networks with fewer inverters and more frequent communication. The authors in \cite{Weckx} provided a centralized optimization formulation with set-points updated at similar timescales to optimize local control curves modeled using a first-order spline model. This approach was well-suited for managing steady-state voltage levels but posed challenges to resolving fast-acting voltage oscillations occurring from cloud intermittency \cite{zhu2016fast}. \textcolor{blue}{By incorporating load tap changers and smart inverters, authors in \cite{Long2019} developed a centralized control method for Volt-Var optimization using sensitivity factors that assumed a constant slope for the Volt-Var controllers of the smart inverters. This approach required recalculation of sensitivity factors by the utility and extensive communication to the smart inverters.} \textcolor{blue}{The issues created by communication delay in a centralized control system or an equivalent master-slave control system are also described in \cite{Muthukaruppan2020}. A master-slave Volt-Var optimization method was presented in \cite{Shi2019} that minimized real power losses and grouped inverters by spatial distance but did not incorporate the Volt-Var control curves of the smart inverters. Another centralized approach presented in}  \cite{GHASEMI2016313} showed how to manage voltage by curtailing active power but had the drawback of requiring simultaneous control of all inverters, a challenge given the time delay in communication and inverter control action. The issues associated with communication and synchronization can be avoided using local control which reduces information exchange and is faster to execute. Among local or decentralized approaches, droop-based voltage control \cite{IEEE_1547,rule21} is the most common framework from literature and utility practices \cite{farivar2013equilibrium,Lipschitz1,Lipschitz2,Lipschitz3,Singhal2019,jahangiri2014distributed,zhu2016fast,Helou2020}. However, improper selection of control parameters can lead to control instability and voltage oscillation \cite{farivar2013equilibrium,jahangiri2014distributed}. Work in \cite{jahangiri2014distributed} highlighted instability concerns and proposed a "delayed droop control" to maintain voltages within acceptable bounds by absorbing or supplying reactive power. But as pointed out in \cite{Singhal2019}, the method in \cite{jahangiri2014distributed} cannot adapt to changing operating conditions and external disturbances that may affect control algorithm convergence. The challenge of convergence was addressed in \cite{Singhal2019} but no generalized method was proposed to choose correction factors ($k_i^d;~\Delta v_{f}$) to manage the shifting of the control curves during unstable operating conditions. Results presented in \cite{Pukhrem} showed that a limiter algorithm can mitigate over-voltage issues, but no analytical model was provided to ensure that the algorithm could address voltage oscillations. Authors in \cite{Braslavsky_stability} conducted an analytical stability analysis for a single inverter connected to a strong source (infinite bus/substation) with potential extensions to an inverter dominated distribution network. A related study completed stability analysis for a distribution network \cite{Heidari2018} using an adaptive decentralized control scheme by assuming an equal ratio of reactance to resistance for all branches and that inverters at load buses could not be controlled. Such assumptions cannot be generally applied to any distribution network.

Inverter control functions are another area of study. Authors in \cite{Pompodakis2016} provided reactive power correction methods to reduce over-voltage issues within a low voltage radial grid without incorporating any Volt-Var or Volt-Watt droop control. Research in Volt-Var control has shown ways to achieve voltage stability for smart inverters \cite{Lipschitz1, Lipschitz2, Lipschitz3,zhu2016fast, Singhal2019,jahangiri2014distributed}. More recently, interconnection standards \cite{IEEE_1547,rule21} require smart inverters to provide both Volt-Var and Volt-Watt control with ongoing research focusing on how to employ the combined approach to provide voltage support and improve reliability \cite{bakerNetwork2017, Heidari2018, GHASEMI2016313, Olivier2016, Aminul2015}. 

The literature review also indicated that models of inverter capability often assume the smart inverter can operate anywhere in the power circle \cite{bakerNetwork2017,farivar2013equilibrium,Lipschitz1,Lipschitz2,Lipschitz3,jahangiri2014distributed,zhu2016fast,Helou2020,Singhal2019}. This allows an inverter to set reactive power generation equal to the apparent power rating of the inverter when there is no real power generation. Such behavior violates hardware limits on maximum reactive power generation that can be lower than the apparent power rating of the inverter \cite{Hu2019,SolarEdge,Sarfaraz2016,Yuan2007}.  

\subsection{Contribution}
\label{contribution}
In this work, piece-wise linear droop models of Volt-Var and Volt-Watt control functions are used to describe inverter behavior where the functions are expressed as \lipschitz functions. Unlike the modeling techniques presented in existing works, this study includes limitations on reactive power generation that account for physical hardware while deriving the \lipschitz constants. \textcolor{blue}{The formulation and derived constants are applicable for all smart inverters with piece-wise linear droop models, and can be used to derive constants for an inverter with reactive power generation capability equal to the apparent power rating, a simplifying assumption common to approaches in prior literature.}

This work also analytically derives stability policies and provides an approach to reduce voltage oscillations and stabilize the network-wide voltage profile using only local information and control actions by using \lyapnouv analysis. \textcolor{blue}{Whereas traditional eigenvalue based stability analysis only can indicate stability based on the sign of the eigenvalues, \lyapnouv analysis allows the development of a robust control policy using the derived stability criterion. In this work,} the network is described using a linear distribution power flow using $v^2$ instead of $v$, thus avoiding the assumptions made in \cite{bakerNetwork2017,zhu2016fast,farivar2013equilibrium, Lipschitz1}. Performing the linearization using $v^2$ ties the network operating condition explicitly to the stability condition unlike work found in \cite{bakerNetwork2017, Singhal2019, Lipschitz1}. The analysis presented holds true as long as droop control curve functions can be represented as \lipschitz functions. Furthermore, by exposing mechanisms that contribute to system instabilities, the paper introduces a completely decentralized policy for restoring a stable voltage profile. Finally, the numerical section showcases how the stability policy can mitigate oscillations induced by operating conditions or cyber-attacks.

\textcolor{blue}{
\subsection{Paper Organization}
Section \ref{sec:distflow} provides the underlying mathematical description of the distribution network using a Matrix-based formulation of the linearized \DistFlow equations from \cite{baran1989optimal,barenwunetworkreconfig}. Section \ref{sec.smartinverter} follows to describe smart inverter operational logic and control functions with a derivation of \lipschitz constants. \lyapnouv analysis is used to derive a sufficient condition to ensure network-wide voltage stability. Section \ref{sec:stability_analysis} provides proof of the stability criterion and the local control model developed using that stability criterion. Simulation results in Section \ref{simulation_results} demonstrate how the approach resolves voltage oscillations from generation intermittency and cyber-attacks, with Section \ref{sec:discussion} concluding the paper with an overall discussion of the study and possible future work.}
\section{Linearized Branch Flow Equations}
\label{sec:distflow}
 \textcolor{blue}{Assuming a balanced, radial system; for every line $(i,j) \in \mathcal{L}$ the \DistFlow equations are \cite{baran1989optimal,barenwunetworkreconfig}:}
\begin{subequations} 
\label{eq:original_equations}
\begin{align}
P_{ij}&=p^c_{j}-p^g_{j}+r_{ij}c^2_{ij}+\sum_{k:(j,k)\in \mathcal{L}} P_{jk} \label{eq:pdistflow} \\
Q_{ij}&=q^c_{j}-q^g_{j}+x_{ij}c^2_{ij}+\sum_{k:(j,k)\in \mathcal{L}} Q_{jk} \label{eq:qdistflow} \\
v_j^2-v_i^2&=-2(r_{ij}P_{ij}+x_{ij}Q_{ij})+(r_{ij}^2+x_{ij}^2)c^2_{ij}\label{eq:vj2-vi2}
\end{align}
\end{subequations}
Representing vectors and matrices by boldface letters, the following vectors and matrices are defined:
\begin{subequations}
\label{eq:vectors}
\begin{align}
\bm{v}_o^2&=(v_0^2,\dots,v^2_{n-1})^T,~~~\bm{v}^2=(v_1^2,\dots,v^2_{n-1})^T  \\
\bm{p}&=(p_0,\dots,p_{n-1})^T,~~~\bm{q}=(q_0,\dots,q_{n-1})^T  \\
\bm{P}&=(P_{01},\dots)^T,~~~\bm{Q}=(Q_{01},\dots)^T \\
\bm{c}^2&=(c^2_{01},\dots)^T,~~\bm{r}=(r_{01},\dots)^T,~~\bm{x}=(x_{01},\dots)^T 
\end{align}
\begin{align}
\label{eq:todefn}
[\bm{T}]_{e,i}&=\begin{cases}
1 & i=t(e)\\
0 & \textnormal{else}
\end{cases}
~~\forall~e\in \mathcal{L}\\
[\bm{F}]_{e,i}&=\begin{cases}
1 & i=f(e)\\
0 & \textnormal{else}
\end{cases}
~~\forall~e\in \mathcal{L}\\
\label{eq:Modefn}
\bm{M}_o&=\bm{F}-\bm{T}
\end{align}
\end{subequations}
Here, $\bm{T}$ and $\bm{F}$ \textcolor{blue}{are referred as the} \textit{to} and \textit{from} matrices, respectively, and $\bm{M}_o$ represents the incidence matrix. Each are of size $ \abs{\mathcal{L}} \times \abs{\mathcal{N}} =(n-1)\times n$. Also, $t(e) \textnormal{and} f(e)$ are two functions that return the \emph{to} and \emph{from} node of an edge, respectively.

Note that $\bm{M}_o\bm{1}=0$, and using the reference voltage $v_0$~(generally the voltage at the root of the feeder) a unique solution for the \DistFlow equations can be found. Let $\bm{M}$ be the $(n-1)\times (n-1)$ matrix obtained removing the first column from $\bm{M}_o$, which is also invertible \cite{zhu2016fast}. Therefore: 
$$\bm{M}_o(\bm{v}_o^2-v_0^2\bm{1}_{n})\equiv \bm{M}(\bm{v}^2-v_0^2\bm{1}_{n-1})$$
Introducing the notation $\diag(\bm{x})$ to indicate a diagonal matrix with the entries of vector $\bm{x}$ in its diagonal entries, the following matrices are defined:
\begin{subequations}
\label{eq:RXL}
\begin{align}
    \bm{R}&=2\bm{M}^{-1}\diag(\bm{r})(\bm{I}-\bm{T}\bm{F}^T)^{-1}\bm{T} \label{eq:R}\\
    \bm{X}&=2\bm{M}^{-1}\diag(\bm{x})(\bm{I}-\bm{T}\bm{F}^T)^{-1}\bm{T} \label{eq:X}\\
    \bm{L}&=\bm{M}^{-1}\left[2\diag(\bm{r})(\bm{I}\!-\!\bm{T}\bm{F}^T)^{-1}\diag(\bm{r})\right.\label{eq:L}\\
            &\left. +2\diag(\bm{x})(\bm{I}\!-\!\bm{T}\bm{F}^T)^{-1}\diag(\bm{x}) -\diag(\bm{r}^2+\bm{x}^2) \right] \nonumber
\end{align}
\end{subequations}
The absolute diagonal values of $\bm{R}$ and $\bm{X}$ matrices represent the electrical distance of the \emph{to} node of a branch from the substation node. Matrix $\bm{L}$ is associated with the ohmic losses of the network.

Using definitions in \eqref{eq:vectors} and \eqref{eq:RXL} and solving for $\bm{P}$ and $\bm{Q}$ permits the \DistFlow equations from \eqref{eq:original_equations} to be written in terms of $\bm{v}^2$:
\begin{align}
    \label{eq.v2}
    \bm{v}^2&=v_0^2\bm{1}+\bm{R}\bm{p}+\bm{X}\bm{q}+\bm{L}\bm{c}^2\\
    \label{eq.c2}
    \bm{c}^2&=\diag^{-1}(\bm{F}\bm{v}^2)(\bm{P}^2+\bm{Q}^2)
\end{align}
Ignoring the losses associated with the term $\bm{L}\bm{c}^2$ as shown in \cite{barenwunetworkreconfig, farivar2013equilibrium}, \eqref{eq.v2} becomes linear in terms of $\bm{v}^2$ and can be written as follows:
\begin{align}
    \label{eq.v3}
    \bm{v}^2=&v_0^2\bm{1}+\bm{R}(\bm{p}^c-\bm{p}^g)+\bm{X}(\bm{q}^c-\bm{q}^g)
\end{align}
Here, the injections in load and generation are partitioned  as $\bm{p}=\bm{p}^c-\bm{p}^g$ and $\bm{q}=\bm{q}^c-\bm{q}^g$, respectively. Hence, \eqref{eq.v3} can be written as follows:
\begin{equation}
    \label{eq:main}
    \bm{v}^2=\underbrace{v_o^2\bm{1}+\bm{Z}\bm{s}^c}_{\bar{\bm{v}}^2}-\bm{Z}\bm{s}^g 
\end{equation} 
where:
\begin{equation}
    \bm{Z}=\mqty[\bm{R} & \bm{X}],~~~\bm{s}^c= \mqty[\bm{p}^c \\ \bm{q}^c],~~~\bm{s}^g= \mqty[\bm{p}^g \\ \bm{q}^g] \label{eq:ZS}
\end{equation}

\section{Smart Inverter Models}\label{sec.smartinverter}
\subsection{Overview of the inverter logic design problem}
The operational logic of smart inverters is aimed at selecting real and reactive power generation set-point values and can be expressed \textcolor{blue}{through a cost function. The cost function $\Gamma$} is a linear combination of two objectives \cite{Sulc_Turitsyn} and whose minimizer is the optimal dispatch for the inverters:
\begin{align}
    \label{eq:cost_function}
    \min_{\bm{v},\bm{p},\bm{q}} \Gamma (\bm{v},\bm{p},\bm{q},\bm{R}, \bm{X})~\mathrm{s.t.}~ (\bm{v},\bm{p},\bm{q})\in \mathcal{S}
\end{align}
where $\mathcal{S}$ is the feasible set for $\bm{v},\bm{p},\bm{q}$ and:
\begin{equation}
\label{eq:cost_optimization}
\Gamma (\bm{v},\bm{p},\bm{q},\bm{R}, \bm{X})=
c_\rho\ \underbrace{ \rho (\bm{p},\bm{q},\bm{R},\bm{X})}_{M1} + c_\nu\ \underbrace{\nu(\bm{v},\bm{p},\bm{q})}_{M2}   
\end{equation}
with $\rho(\bm{p},\bm{q},\bm{R}, \bm{X})$ expressing ohmic losses and $\nu(\bm{v},\bm{p},\bm{q})$ promoting a voltage profile that deviates as little as possible from the nominal voltage. Non-negative weights $c_\rho$ and $c_\nu$ can be tuned to adjust the relative importance of each objective. \\

The feasible set in $\bm{v},\bm{p},\bm{q}$ is defined by the \DistFlow equations. When exclusively minimizing losses (term $M1$ in \eqref{eq:cost_optimization}), the minimization function and constraints can be written as: 
\begin{subequations}
\label{eq:M1}
\begin{align}
     \min_{\bm{p}_g,\bm{q}_g}& (\bm{p}^c\!-\!\bm{p}^g)^T \bm{R} (\bm{p}^c\!-\!\bm{p}^g) \!+\! (\bm{q}^c\!-\!\bm{q}^g)^T \bm{X} (\bm{q}^c-\bm{q}^g) \\
      \textnormal{s.t.} 
     ~~& \bm{p}^g_{\min} \leq \bm{p}^g \leq \bm{p}^g_{\max}; \ \   \bm{q}^g_{\min} \leq \bm{q}^g \leq \bm{q}^g_{\max}
\end{align}
\end{subequations}
Here, $\bm{p}^g_{\min}, \bm{p}^g_{\max}, \bm{q}^q_{\min}, \bm{q}^g_{\max}$ represent the minimum real power, maximum real power, minimum reactive power, and maximum reactive power that an inverter can generate at any instant of time, respectively. This objective does not include any constraint on the voltage \textcolor{blue}{profile; rather, the voltage profile will be the solution of the \DistFlow equations and may violate the standard acceptable} voltage range. 

For the policy focusing only on minimizing voltage deviation from the nominal value (term $M2$ in \eqref{eq:cost_optimization}), the minimization function and constraints can be written as:%
\begin{subequations}%
\begin{align}\label{eq:M2}
     \min_{\bm{p}^g,\bm{q}^g}& \  \norm{\bm{v}-\bm{v}_{nom}}^2_{2} \\
     \textnormal{s.t.} 
     ~~& \bm{v}^2=v_0^2\bm{1}+\bm{R}(\bm{p}^c-\bm{p}^g)+\bm{X}(\bm{q}^c-\bm{q}^g) \\
    & \bm{v}_{\min} \leq \bm{v} \leq \bm{v}_{\max} \\
    & \bm{p}^g_{\min} \leq \bm{p}^g \leq \bm{p}^g_{\max}; \ \   \bm{q}^g_{\min} \leq \bm{q}^g \leq \bm{q}^g_{\max} 
\end{align}
\end{subequations}
where $\bm{v}_{nom}$ represents the nominal voltage vector. This includes additional constraints beyond \eqref{eq:M1}.

The direct minimization of $\Gamma (\bm{v},\bm{p},\bm{q},\bm{R}, \bm{X})$ requires communicating real-time information of all loads and inverters to a centralized solver. The set-points (i.e. the solution of problem \eqref{eq:M2}) can be updated at regular predefined times based on the load components $\bm{p}_c$ and $\bm{q}_c$ and the maximum apparent power that can be generated at a current time. Conversely, a purely decentralized method uses local measurements as inputs for deciding inverter set-points. The control policy is instead hardwired at the time of installation, noting, however, that remote firmware updates are possible but excluded from consideration here. Two forms of policies exist for local control: (1) policies that adjust reactive and/or real power to minimize losses, $M1$ \cite{kernel_based_learning}, and (2) policies that respond to voltage measurements and try reduce voltage deviations by adjusting reactive and/or real power, $M2$. The first policy will not cause oscillations but may result in unacceptable voltage levels. The second policy leads to a closed-loop system and may cause oscillations while trying to maintain voltage within operating limits  \cite{farivar2013equilibrium,Lipschitz1,Lipschitz2,Lipschitz3,Braslavsky_stability} as the \DistFlow equations constrain the values $(\bm{v},\bm{p},\bm{q})$. 

\subsection{Inverter modeling assumptions}
\label{sec:inverter_modeling}
\textcolor{blue}{This work focuses} on the second type of strategy ($M2$) described above while trying to adjust both real and reactive power. This strategy requires defining the Volt-Watt and Volt-Var control functions, respectively, as,
\begin{equation}\label{eq:fp_fq_generic}
   \bm{f_p}(\Delta \bm{v}) : \mathbb{R}^n \mapsto \mathbb{R}^{n} \quad\text{and}\quad 
   \bm{f_q}(\Delta \bm{v}) : \mathbb{R}^n \mapsto \mathbb{R}^{n}
\end{equation}  
\textcolor{blue}{For local policies, these functions take each entry of the vector that represents the deviation from the nominal voltage according to \eqref{eq:dif_voltage} as input, and provide the real and reactive power injection values for the corresponding bus as output.}
\begin{equation}
    \label{eq:dif_voltage}
    \Delta \bm{v}= \bm{v}-\bm{v}_{nom}    
\end{equation}

\textcolor{blue}{Considering $\Delta v_i$ as the voltage deviation of the $i^{th}$ bus,} and $f_{p,i}(\Delta v_i)$ and $f_{q,i}(\Delta v_i)$ representing the Volt-Watt and Volt-Var control functions of the inverter at $i^{th}$ bus, respectively, the following assumptions can be made regarding the control policy \cite{farivar2013equilibrium,Lipschitz1,Lipschitz2,Lipschitz3}:
\begin{description}
     \item[A1: ] Both $f_{p,i}(\Delta v_{i})$ and $f_{q,i}(\Delta v_{i})$ are monotonically decreasing functions and are continuous and piece-wise differentiable (control functions may include regions where the derivative is zero, such as a dead-band or constant output). 
     \item[A2: ] The derivatives of the control functions are bounded, i.e. there exists $C_{p,i}<+\infty$ and $C_{q,i}<+\infty$ such that $\abs{f_p^{'} (\Delta v_i)} \leq C_{p,i} $ and $\abs{f_q^{'} (\Delta v_i)} \leq C_{q,i} $ for all $\Delta v_i$.
\end{description}
For executing Volt-Watt and Volt-Var control, \textcolor{blue}{this work adopts the droop control curves mentioned in \cite{inverter2016,IEEE_1547,rule21}. In these works,} the modulation of active and reactive power injection indicated in \eqref{eq:fp_fq_generic} depends upon \eqref{eq:dif_voltage}. The derivation of the corresponding \lipschitz constants are hereafter shown. This example is used as a case study in the numerical simulations in Section \ref{simulation_results} that corroborate \textcolor{blue}{the} analysis.

\subsection{Definitions}
For an inverter at node $i$ of a distribution network, $s_{i}$ and $\overline{p_{i}}$ represent the rated apparent power and the maximum real power output at certain irradiance respectively. Following \cite{Braslavsky_stability} and dropping the superscript $^g$ representing generation, $\overline{p_{i}}$ can be expressed as a fraction of $s_{i}$: 
\begin{equation}
\label{eq:mu}
  \overline{p_{i}} = \mu\ s_{i}; ~~~ 0 < \mu \leq 1  
\end{equation}
At nominal irradiance, the inverter can generate real power equal to its apparent power rating, resulting in $\mu = 1$. \\
The maximum reactive power consumption/injection depends on hardware limits ($q_{i}^{lim}$) and available reactive power. Hence:
\begin{equation}
\label{eq:qbar_v}
    \overline{q_{i}} (\Delta v_i) = \min \left(q_{i}^{ lim},\sqrt{s_{i}^2 - f_{p,i}^2 (\Delta v_i)}\right)
\end{equation} 
For constant values of $f_{p,i} (\Delta v_i)$, $\overline{q_{i}} (\Delta v_i)$ is independent of voltage deviation and can be written as:
\begin{equation}
\label{eq:qbar}
 	\overline{q_{i}} (\Delta v_i) = \min \left(q_{i}^{ lim},\sqrt{s_{i}^2 - f_{p,i}^2}\right) = \overline{q_{i}}   
\end{equation}
The piece-wise linear Volt-Watt and Volt-Var control curves (referred to as \emph{droop} curves) of the inverter at the $i^{th}$ bus respectively are shown in Figures \ref{fig:vwc_new} and \ref{fig:vvc_new} where, $\epsilon_p,~\epsilon_q^{+},~\epsilon_q^{-},~V_p\ \textnormal{and}\ V_q^{+}$ are positive, $V_q^{-}$ is negative, $V_q^{+}-\epsilon_q^{+}/2>0$, $V_q^{-}+\epsilon_q^{-}/2<0$,  $V_p-\epsilon_p/2>0$. \textcolor{blue}{Using  \eqref{eq:mu}-\eqref{eq:qbar}, \eqref{eq:vwcsg} and \eqref{eq:vvcsqg}} provide the mathematical formulation representing Figures \ref{fig:vwc_new} and \ref{fig:vvc_new}, respectively.
\begin{figure}[htbp]
	\centering
	\includegraphics[width=1\columnwidth]{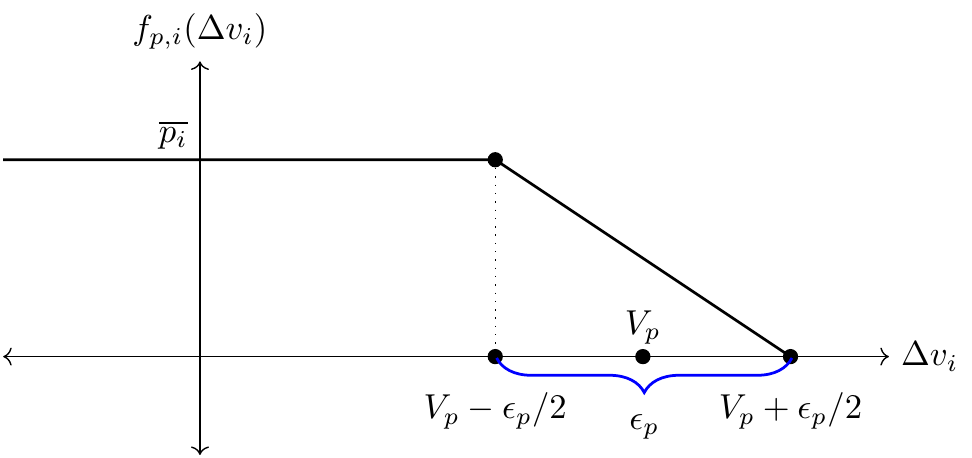}
	\caption{Inverter Volt-Watt Curve }
	\label{fig:vwc_new}
\end{figure}
\begin{figure}[htbp]
	\centering
	\includegraphics[width=1\columnwidth]{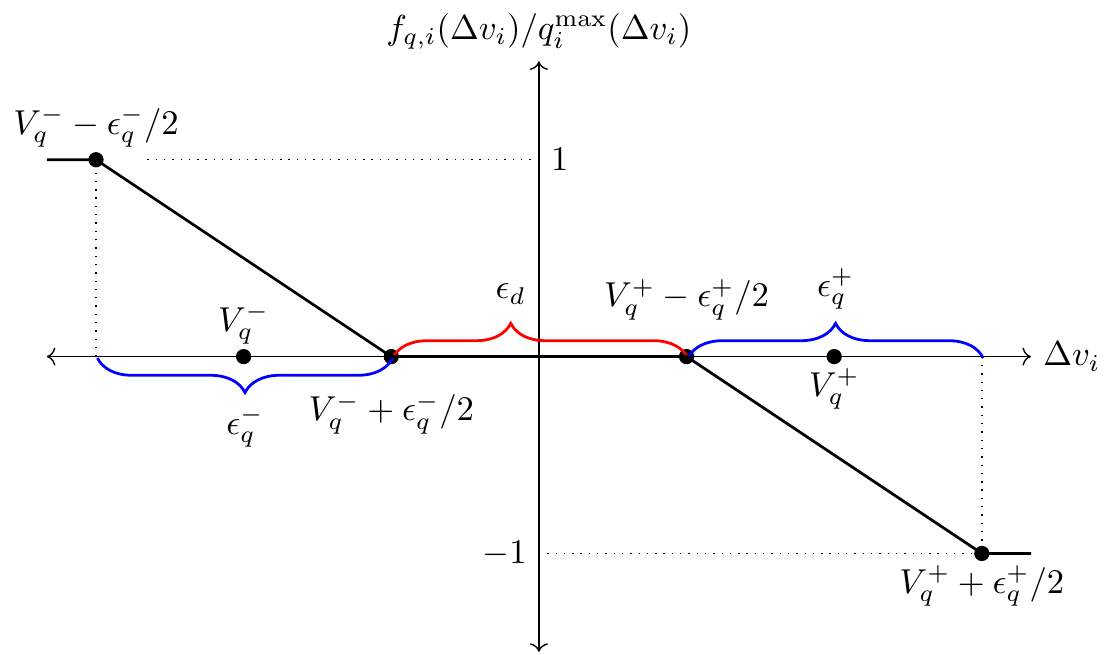}
	\caption{Inverter Volt-Var Curve}
	\label{fig:vvc_new}
\end{figure}
\begin{align}
f_{p,i}(\Delta v_i) = 
&\begin{cases}
\overline{p_i} &\Delta v_i \leq (V_p - \frac{\epsilon_p}{2}) \\[0.75em]
\dfrac{\overline{p_i}}{-\epsilon_p} \left(\Delta v_i - (V_p + \frac{\epsilon_p}{2})\right) &   
|\Delta v_i - V_p|<\frac{\epsilon_p}{2}
\\[0.75em]
0 & \Delta v_i > (V_p + \frac{\epsilon_p}{2})  \\
\end{cases} 
\label{eq:vwcsg}
\end{align}
\textcolor{blue}{
\begin{align}
& f_{q,i} (\Delta v_i) = \nonumber \\
&\begin{cases}
\overline{q_i} & \Delta v_i<V_{q}^{-}-\frac{\epsilon_q^{-}}{2}\\[0.75em]
\dfrac{\overline{q_i}}{-\epsilon_q^{-}} \Big(\Delta v_i  - (V_q^{-} + \frac{\epsilon_q^{-}}{2}) \Big) &
 \abs{\Delta v_i - V_q^{-}}<\frac{\epsilon_q^{-}}{2}
\\[1em]
0 &  |\Delta v_i|< \frac{\epsilon_d}{2}  
\\[0.75em]
\dfrac{\overline{q_i}(\Delta v_i)}{-\epsilon_q^{+}} \left(\Delta v_i  - (V_q^{+} - \frac{\epsilon_q^{+}}{2}) \right) &
 |\Delta v_i - V_q^{+}|<\frac{\epsilon_q^+}{2}
\\[1em]
-\overline{q_i} & \Delta v_i > V_{q}^{+} - \frac{\epsilon_q^{+}}{2}
\end{cases}
\label{eq:vvcsqg}
\end{align}
}

Recalling assumptions A1 and A2, the \textcolor{blue}{following} lemma confirms that the IEEE 1547 policies described above are \lipschitz functions. 
\begin{lemma}
\label{lemma.Lipschitz}
The functions $f_{p,i}(\Delta v_i)$ and $f_{q,i}(\Delta v_i)$ can be represented as \lipschitz functions where the \lipschitz constants $C_{p,i}$ and $C_{q,i}$ can be written as: 
\begin{align}
    C_{p,i} &= \dfrac{\overline{p_i}}{\epsilon_p};~~~~
    \label{eq:def_cq}
    C_{q,i} = \dfrac{\overline{p_i}}{\epsilon_p \sqrt{\frac{1}{\mu^2}-1}}+ \dfrac{q_i^{lim}}{\epsilon_q^{+}}
\end{align} 
where, $\mu$ is defined in \eqref{eq:mu}.
\end{lemma}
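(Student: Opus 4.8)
The plan is to invoke the standard fact that a continuous, piecewise-differentiable function is \lipschitz with constant equal to the supremum of $\abs{f'}$ over the points where the derivative exists; this follows from the mean value theorem on each smooth piece together with continuity at the junction points, which holds here by construction since adjacent cases in \eqref{eq:vwcsg} and \eqref{eq:vvcsqg} agree at their shared breakpoints. Under assumptions A1 and A2 it therefore suffices to bound $\abs{f_{p,i}'}$ and $\abs{f_{q,i}'}$ on each piece and take the maximum, which is exactly the constant $C_{p,i}$, respectively $C_{q,i}$, that the lemma asserts.

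First I would dispose of $f_{p,i}$. On the two flat pieces ($\Delta v_i \le V_p - \epsilon_p/2$ and $\Delta v_i > V_p + \epsilon_p/2$) the derivative vanishes, while on the ramp $\abs{\Delta v_i - V_p} < \epsilon_p/2$ the function is affine with slope $-\overline{p_i}/\epsilon_p$. Hence $\sup\abs{f_{p,i}'} = \overline{p_i}/\epsilon_p$, giving $C_{p,i} = \overline{p_i}/\epsilon_p$ directly. For $f_{q,i}$ the flat pieces (deep negative, dead-band, deep positive) again contribute zero, and the negative ramp contributes slope magnitude $\overline{q_i}/\epsilon_q^{-} \le q_i^{lim}/\epsilon_q^{-}$, where $\overline{q_i}$ is constant there because $V_q^{-} + \epsilon_q^{-}/2 < 0 < V_p - \epsilon_p/2$ forces $f_{p,i} = \overline{p_i}$ throughout that band.

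The hard part will be the positive ramp $\abs{\Delta v_i - V_q^{+}} < \epsilon_q^{+}/2$, because there $\overline{q_i}(\Delta v_i) = \sqrt{s_i^2 - f_{p,i}^2(\Delta v_i)}$ is genuinely voltage-dependent whenever this band overlaps the Volt-Watt ramp, so $f_{q,i}$ is a product of two varying factors rather than a single affine piece. I would write $f_{q,i}(\Delta v_i) = -\tfrac{1}{\epsilon_q^{+}}\,\overline{q_i}(\Delta v_i)\,\bigl(\Delta v_i - (V_q^{+} - \tfrac{\epsilon_q^{+}}{2})\bigr)$, differentiate by the product rule, and bound the two resulting terms separately. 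The term from differentiating the linear factor is bounded by $\overline{q_i}/\epsilon_q^{+} \le q_i^{lim}/\epsilon_q^{+}$, which furnishes the second summand of $C_{q,i}$.

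For the remaining term I would apply the chain rule, $\overline{q_i}' = -f_{p,i}f_{p,i}'/\sqrt{s_i^2 - f_{p,i}^2}$, use $\abs{f_{p,i}'} \le \overline{p_i}/\epsilon_p$, and maximize the increasing map $t \mapsto t/\sqrt{s_i^2 - t^2}$ over $t = f_{p,i} \in [0,\overline{p_i}]$; substituting $\overline{p_i} = \mu s_i$ this maximum equals $\mu/\sqrt{1-\mu^2} = 1/\sqrt{1/\mu^2 - 1}$. The two delicate points are the cancellation that makes the band width drop out and the identification of the true worst case. Since $\abs{\Delta v_i - (V_q^{+} - \epsilon_q^{+}/2)} < \epsilon_q^{+}$ on this band, the factor $\epsilon_q^{+}$ cancels the $1/\epsilon_q^{+}$ prefactor, leaving precisely $\overline{p_i}/(\epsilon_p\sqrt{1/\mu^2 - 1})$, the first summand; and the worst case genuinely occurs on the square-root branch of the $\min$ with $f_{p,i}$ at its maximum, whereas on the $q_i^{lim}$ branch $\overline{q_i}$ is constant so $\overline{q_i}' = 0$ and only the second summand survives. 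Summing the two bounds and noting that the positive-ramp constant dominates the negative-ramp slope (so that the global maximum is attained there, e.g.\ whenever $\epsilon_q^{-} \ge \epsilon_q^{+}$) then yields $\sup\abs{f_{q,i}'} \le C_{q,i}$, completing the argument.
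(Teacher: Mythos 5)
Your proposal is correct and follows essentially the same route as the paper's proof: differentiate each piece, apply the product and chain rules on the positive ramp where $\overline{q_i}(\Delta v_i)$ is voltage-dependent, bound the linear factor by $\epsilon_q^{+}$ so the band width cancels the $1/\epsilon_q^{+}$ prefactor, and evaluate the worst case at $f_{p,i}=\overline{p_i}=\mu s_i$, which yields exactly the two summands of $C_{q,i}$ in \eqref{eq:def_cq}. The only difference is one of care rather than method: you explicitly check that the negative-ramp slope $\overline{q_i}/\epsilon_q^{-}$ is dominated by $C_{q,i}$ (under the condition $\epsilon_q^{-}\ge\epsilon_q^{+}$), whereas the paper takes the maximum only over the positive-ramp expression \eqref{eq:Cqdv} and leaves that case implicit.
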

\begin{proof}
Taking the derivative of \eqref{eq:vwcsg},
\begin{align}
\frac{d f_{p,i}(\Delta v_i)}{d\Delta v_i} = 
&\begin{cases}-
\dfrac{\overline{p_i}}{\epsilon_p} &\mbox{if: }   (\Delta v_i - V_p) \in\  \interval[open left]{-\frac{\epsilon_p}{2}}{\frac{\epsilon_p}{2}} \\[0.75em]
0 &\mbox{if: otherwise } \\ 
\end{cases} 
\label{eq:vpwcsg}
\end{align}
Hence, the \lipschitz constant for the Volt-Watt curve of the inverter at $i^{th}$ bus is $C_{p,i}=\dfrac{\overline{p_i}}{\epsilon_p}$, \textcolor{blue}{as reported in \eqref{eq:def_cq}.}\\
Similarly, taking the derivative of \eqref{eq:vvcsqg}:
\begin{align}
&\frac{d f_{q,i}(\Delta v_i)}{d\Delta v_i}(\Delta v_i) = \nonumber \\
&\begin{cases}
-\dfrac{\Bar{q_i}}{\epsilon_q^{-}} ~ &\mbox{if: } (\Delta v_i - V_q^{-}) \in\  \interval[open right]{-\frac{\epsilon_q^{-}}{2}}{\frac{\epsilon_q^{-}}{2}} \\[1em]
\frac{d\left(\Bar{q_i}(\Delta v_i)
\frac{\Delta v_i  - (V_q^{+} - \frac{\epsilon_q^{+}}{2})}{-\epsilon_q^{+}} \right)}{d \Delta v_{i}} \!&\mbox{if: } (\Delta v_i - V_q^{+}) \in\  \interval[open left]{-\frac{\epsilon_q^{+}}{2}}{\frac{\epsilon_q^{+}}{2}} \\[1em]
0 &\mbox{if: otherwise } 
\end{cases}
\label{eq:vpcsqg}
\end{align}
\textcolor{blue}{For the second case of \eqref{eq:vpcsqg}: }
\begin{align}     
&\frac{d\left(\Bar{q_i}(\Delta v_i)
 \frac{\Delta v_i \! -\! (V_q^{+} \!-\! \frac{\epsilon_q^{+}}{2})}{-\epsilon_q^{+}} 
 \right)
 }
 {d \Delta v_{i}} \nonumber 
 \\
&=u\!\left(q_i^{lim}-\sqrt{s_i^2 - f_{p,i}^2 (\Delta v_i)}\right) \frac{-f_{p,i}(\Delta v_i) \frac{df_{p,i}(\Delta v_i)}{d\Delta v_i}}{\sqrt{ s_i^2 - f_{p,i}^2 (\Delta v_i)}} \nonumber \\ 
&\times \left(\Delta v_i- (V_q^+ - \frac{\epsilon_q^+}{2}) \right) 
+\min\left(q_i^{lim},\sqrt{s_i^2 - f_{p,i}^2(\Delta v_i)}\right)
\label{eq:Cqdv}
\end{align}
where $u(\bullet)$ is the step function. 
The \lipschitz constant for the Volt-Var curve can be calculated by taking the maximum of \eqref{eq:Cqdv}. In considering that $f_{p,i} (\Delta v_i)$ has a maximum value of $\overline{p_i}$, and using \eqref{eq:vpwcsg} and \eqref{eq:mu}, a bound for the derivative of the Volt-Var curve of the inverter at $i^{th}$ bus is the value reported in \eqref{eq:def_cq}.
\end{proof}
\textcolor{blue}{Some additional observations are useful for controlling the \lipschitz constants of smart inverters to ensure stability of the overall network}. First, it is desirable to inject as much real power as possible keeping $\epsilon_p>0$ as small as possible. Also, since the dead-band for the reactive power curve can be at most zero, the parameters $\epsilon_q^+$ and $\epsilon_q^-$ are such that:
\begin{equation}
\epsilon_q^+\leq 2V^+_q  \quad\text{and}\quad  \epsilon_q^- \leq -2V^-_q     
\end{equation}
\section{Stability Analysis}
\label{sec:stability_analysis}
In this work, the inverter dynamics presented in \cite{inverter2016} \textcolor{blue}{and \eqref{eq:main} are} used to derive the stability criterion of a distribution network with smart inverters. The main result of this paper is a sufficient condition to achieve voltage stability for a network with inverters that use only local information and logic satisfying assumptions A1 and A2 expressed in Section \ref{sec.smartinverter}.

Let $r_i(\bm{A}) = \sum_j \bm{A}_{ij}$, which is the row sum of the elements of matrix ${\bm A}$. \textcolor{blue}{The stability condition can be stated as the following Theorem}:
\begin{theorem}
\label{lem:stability}
Let  $\bm{v}^{*2}=\bar{\bm{v}}^2-\bm{Z}\bm{s}^{*} $ be the voltage vector that is a fixed point of the inverter dynamics. If A1 and A2 (c.f. Section \ref{sec.smartinverter}) hold, 
a sufficient condition for stability of an inverter dominated network is:
\begin{align}
    \label{eq:final_condition}
    C_{p,i}^2 + C_{q,i}^2 < \dfrac{2~v_{i}^{*2}}{r_i(\bm{Z} \bm{Z}^T)} 
\end{align}
\end{theorem}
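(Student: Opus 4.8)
The plan is to treat the closed loop as a first-order relaxation (equivalently a discrete-time map) in the inverter set-point deviation $\tilde{\bm s}=\bm s^{g}-\bm s^{*}$ and to exhibit a quadratic \lyapnouv function that strictly decreases away from the fixed point. The natural starting point is the linearized \DistFlow relation \eqref{eq:main}, which at equilibrium reads $\bm v^{*2}=\bar{\bm v}^{2}-\bm Z\bm s^{*}$, so the squared-voltage error is exactly linear in the set-point error: $\bm v^{2}-\bm v^{*2}=-\bm Z\tilde{\bm s}$. The inverter logic updates $\bm s^{g}$ through the control maps $\bm f_p,\bm f_q$ evaluated at $\Delta\bm v$, and the internal time constants enter only as a positive scaling of $\dot{\bm s}$, i.e. as a positive-definite weighting $\bm\Lambda$ in the candidate $V=\tfrac12\tilde{\bm s}^{T}\bm\Lambda\tilde{\bm s}$. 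Because that weighting cancels out of the resulting sign condition, the criterion will come out independent of the delays, matching the claim in the abstract.

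I would first convert the squared-voltage error into a first-order voltage error. Writing $v_i^{2}-v_i^{*2}=(v_i-v_i^{*})(v_i+v_i^{*})$ and linearizing about the operating point ($v_i+v_i^{*}\approx 2v_i^{*}$) gives $\tilde v_i\approx -[\bm Z\tilde{\bm s}]_i/(2v_i^{*})$, which is precisely where the operating-condition term $v_i^{*2}$ will enter the final bound. Next I would invoke the structural assumptions: monotonic decrease (A1) fixes the sign of each feedback term, while the \lipschitz bounds (A2) together with Lemma \ref{lemma.Lipschitz} give the componentwise estimates $|f_{p,i}(\Delta v_i)-f_{p,i}(\Delta v_i^{*})|\le C_{p,i}|\tilde v_i|$ and $|f_{q,i}(\Delta v_i)-f_{q,i}(\Delta v_i^{*})|\le C_{q,i}|\tilde v_i|$. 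Substituting the voltage estimate and collecting the real- and reactive-power contributions bus by bus, a Cauchy--Schwarz step on each pair $(C_{p,i},C_{q,i})$ yields a bound on the updated deviation $\tilde{\bm s}^{+}$ of the form $\|\tilde{\bm s}^{+}\|^{2}\le\sum_i\frac{C_{p,i}^{2}+C_{q,i}^{2}}{4v_i^{*2}}\,[\bm Z\tilde{\bm s}]_i^{2}$, i.e. the quadratic form $\tfrac14\tilde{\bm s}^{T}\bm Z^{T}\bm D\bm Z\tilde{\bm s}$ with $\bm D=\diag\!\big((C_{p,i}^{2}+C_{q,i}^{2})/v_i^{*2}\big)$.

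The final step is to force this quadratic form below $\|\tilde{\bm s}\|^{2}$ and, crucially, to decouple the matrix inequality $\tfrac14\bm Z^{T}\bm D\bm Z\prec\bm I$ into the per-inverter, locally checkable condition \eqref{eq:final_condition}. Here I would pass to the equivalent spectral bound on $\bm D^{1/2}\bm Z\bm Z^{T}\bm D^{1/2}$ (which shares the nonzero eigenvalues of $\bm Z^{T}\bm D\bm Z$) and control its largest eigenvalue by a row-sum (Gershgorin / $\ell_\infty$) estimate. The entries of $\bm Z=[\bm R\ \ \bm X]$ are sign-definite thanks to the radial tree structure (the factors $\bm M^{-1}$ and $(\bm I-\bm T\bm F^{T})^{-1}$ encode ancestor paths), so $[\bm Z\bm Z^{T}]_{ij}\ge 0$ and the diagonal is dominated by the row sum $r_i(\bm Z\bm Z^{T})$; bounding $[\bm Z\tilde{\bm s}]_i^{2}$ in this way collapses the coupled condition into $C_{p,i}^{2}+C_{q,i}^{2}<2v_i^{*2}/r_i(\bm Z\bm Z^{T})$ for every bus $i$. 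I expect this decoupling to be the main obstacle: turning the honest spectral contraction condition into a clean per-bus row-sum statement relies on the sign-definiteness of $\bm Z$ and on a careful accounting of the constant $2$ (which I anticipate arising from a Young-type split $ab\le\tfrac12a^{2}+\tfrac12b^{2}$ in the \lyapnouv estimate), and it is the step where conservativeness is traded for a fully decentralized criterion. A secondary technical point is justifying the linearization $v_i+v_i^{*}\approx 2v_i^{*}$, i.e. acknowledging that the argument certifies local small-signal stability about the operating point $\bm v^{*}$ rather than global stability.
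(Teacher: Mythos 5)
Your proposal is correct and is essentially the paper's own argument: a quadratic Lyapunov/contraction estimate in the set-point deviation with the time-constant weighting cancelling (hence delay independence), a first-order expansion of the elementwise square root at the fixed point producing the $1/(2v_i^{*})$ factor, the Lipschitz bounds combined via Cauchy--Schwarz, and finally a row-sum bound on the largest eigenvalue of the diagonally-weighted matrix $\bm{Z}\bm{Z}^T$ (the paper cites Garren for $\lambda_{\max}(\bm{A})\leq \max_i r_i(\bm{A})$, valid here by the entrywise nonnegativity you correctly flag) to decouple the spectral condition bus by bus. Two small details to adjust: apply the row-sum bound to the non-symmetric product $\bm{D}\bm{Z}\bm{Z}^T$ (similar to your $\bm{D}^{1/2}\bm{Z}\bm{Z}^T\bm{D}^{1/2}$, so same spectrum), whose $i$-th row sum is exactly $D_i\, r_i(\bm{Z}\bm{Z}^T)$ and therefore collapses into the per-bus criterion, rather than the symmetrized matrix whose row sums mix $\sqrt{D_iD_j}$ across buses and are not controlled by the per-bus hypothesis; and note that your bookkeeping yields the constant $4$ where the paper's Cauchy--Schwarz route lands on $2$, a harmless discrepancy since the theorem's stated condition with $2$ is then a fortiori sufficient.
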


\begin{proof}
Writing \eqref{eq:main} and dropping the superscript: 
\begin{align}
    \bm{v} &= \sqrt{\bar{\bm{v}}^2-\bm{Z}\bm{s}}
\end{align}
where $\bar{\bm{v}}^2=v_o^2\bm{1}+\bm{Z}\bm{s}^c$ and the square root operates element-wise. Let
\begin{align}
      \bm{f_S} (\bm{v}-\bm{v}_{nom})=\mqty[\bm{f}_p \\ \bm{f}_q];
      ~~ \bm{T}= \mqty [\bm{T}_p \\ \bm{T}_q] ;~~ \bm{C_s}= \mqty [\bm{C}_p \\ \bm{C}_q] \nonumber
\end{align}
where $\bm{T}_p$, $\bm{T}_q$, $\bm{C}_p$, and $\bm{C}_q$ are diagonal matrices containing the low-pass filter time constants (time delays) and inverter droop values, respectively. 

The inverter dynamics can be written as \eqref{eq:inverter_dynamics}:
\begin{subequations}
	\begin{align}
	\label{eq:inverter_dynamics}
	\bm{T} \dot{\bm{s}} = \bm{f_S} \left(\sqrt{\bar{\bm{v}}^2-\bm{Z}\bm{s}} - \bm{v}_{nom} \right) - \bm{s}
	\end{align}
	and for the fixed point $\bm{s}^{*}$: 
	\begin{align}
	\bm{0} = \bm{f_S} \left(\sqrt{\bar{\bm{v}}^2-\bm{Z}\bm{s}^{*}} - \bm{v}_{nom} \right) - \bm{s}^{*}
	\label{eq:eqdeltas}
	\end{align}
\end{subequations}
Defining a shift about the fixed point as $\bm{\delta s}= \bm{s}-\bm{s}^*$  and using \eqref{eq:eqdeltas}, the inverter dynamics can be rewritten as:
\begin{multline}
     \bm{T} \bm{\delta} \dot{\bm{s}} = \bm{f_S} \left( \sqrt{\bar{\bm{v}}^2-\bm{Z}\bm{s}^{*}-\bm{Z}\bm{\delta s}} - \bm{v}_{nom} \right) \\ - \bm{f_S} \left( \sqrt{\bar{\bm{v}}^2-\bm{Z}\bm{s}^{*}} - \bm{v}_{nom} \right) - \bm{\delta s}
     \label{eq:deltasdot}
\end{multline}
Now the stability of \eqref{eq:deltasdot} can be analyzed with  \lyapnouv analysis. \textcolor{blue}{Considering} the candidate \lyapnouv function:  $$J(\bm{\delta s}) = \frac{1}{2}\bm{\delta s}^{T}\ \bm{T}\ \bm{\delta s}$$ Taking the derivative of $J(\bm{\delta s})$ along the trajectory yields:
\begin{align}
    \label{eq:derivative}
    \dot{J}(\bm{\delta s}) =  \bm{\delta s}^{T}\ \bm{T} \bm{\delta}\dot{\bm{s}}
\end{align}
Putting the value of $\bm{T} \bm{\delta} \dot{\bm{s}}$ from \eqref{eq:deltasdot} in \eqref{eq:derivative} provides:
\begin{multline}
    \label{eq:Jdeltasdot}
     \dot{J}(\bm{\delta s}) =  \bm{\delta s}^{T} \left[\bm{f_S} \left(\sqrt{\bar{\bm{v}}^2-\bm{Z}\bm{s}^{*}-\bm{Z}\bm{\delta s}} - \bm{v}_{nom} \right)\right. \\\left. - \bm{f_S} \left(\sqrt{\bar{\bm{v}}^2-\bm{Z}\bm{s}^{*}} - \bm{v}_{nom} \right)\right] -\bm{\delta s}^{T} \bm{\delta s}
\end{multline}
Applying the Cauchy-Schwartz inequality on \eqref{eq:Jdeltasdot} ($\norm{\bullet}$ represents the 2 norm): 
\begin{multline}
    \dot{J} (\bm{\delta s}) \leq \norm{\bm{\delta s}} \big\Vert\bm{f_S} (\sqrt{\bar{\bm{v}}^2-\bm{Z}\bm{s}^{*}-\bm{Z}\bm{\delta s}} - \bm{v}_{nom} )- \\ \bm{f_S}  (\sqrt{\bar{\bm{v}}^2-\bm{Z}\bm{s}^{*}} - \bm{v}_{nom})\big\Vert - \norm{\bm{\delta s}}^2
    \label{eq:firstCS}
\end{multline}
and then applying the \lipschitz condition: 
\begin{multline}
\label{eq:lipschitze}
   \dot{J} (\bm{\delta s}) \leq \norm{\bm{\delta s}} \left\Vert \bm{C_s}  \left(\sqrt{\bar{\bm{v}}^2-\bm{Z}\bm{s}^{*}-\bm{Z}\bm{\delta s}} \right.\right. 
   \left. \left.- \sqrt{\bar{\bm{v}}^2-\bm{Z}\bm{s}^{*}}\ \right) \right\Vert \\ - \norm{\bm{\delta s}}^2
\end{multline}
Now by using $\sqrt{a\!-\!x}-\! \sqrt{a}\leq  \!-\frac{x}{2\sqrt{a}}$, with $a = \bar{\bm{v}}^2\!-\!\bm{Z}\bm{s}^{*}$ and $x= \bm{Z}\bm{\delta s}$, the right-hand side of \eqref{eq:lipschitze} can be written as:
\begin{multline}
    \norm{\bm{C_s} \left(\sqrt{\bar{\bm{v}}^2-\bm{Z}\bm{s}^{*}-\bm{Z}\bm{\delta s}} - \sqrt{\bar{\bm{v}}^2-\bm{Z}\bm{s}^{*}} \right)} \leq \\
     \norm{- \bm{C_s} \left\{ \dfrac{1}{2}   \diag^{-1}\left(\sqrt{\bar{\bm{v}}^2-\bm{Z}\bm{s}^{*}}\right) \bm{Z} \bm{\delta s} \right\}}
\end{multline}
This allows \eqref{eq:lipschitze} to be expressed as:
\textcolor{blue}{
\begin{multline}
   \dot{J} (\bm{\delta s}) \leq \norm{\bm{\delta s}} \norm{\bm{C_s} \left\{ \dfrac{1}{2}   \diag^{-1}\left(\sqrt{\bar{\bm{v}}^2-\bm{Z}\bm{s}^{*}}\right) \bm{Z} \bm{\delta s} \right\}} \\ - \norm{\bm{\delta s}}^2
    \label{eq:secondCS} 
\end{multline}
}
In recalling that $\bar{\bm{v}}^2-\bm{Z}\bm{s}^{*} = \bm{v}^{*2}$, where $\bm{v}^*$ is the fixed point voltage vector for the system with active inverters, and using the fact that the product of diagonal matrices is commutative, \eqref{eq:secondCS} can be written as:
\begin{multline}
    \label{eq:lambdaroot}
    \dot{J} (\bm{\delta s}) \leq \dfrac{\norm{\bm{\delta s}}^2}{\sqrt{2}} \lambda^{1/2}_{max} \left( \left[\bm{C}_p^2 + \bm{C}_q^2\right] \diag^{-1} (\bm{v}^{*2}) \bm{Z} \bm{Z}^T \right) \\ -  \norm{\bm{\delta s}}^2
\end{multline}
Applying \lyapnouv stability condition ($\dot{J}<0$) on \eqref{eq:lambdaroot}: 
\begin{align}
    \lambda_{\max} \left( \left[\bm{C}_p^2 + \bm{C}_q^2\right] \diag^{-1} (\bm{v}^{*2}) \bm{Z} \bm{Z}^T \right)  < 2
\end{align}
Applying the theorem $\lambda_{\max} (\bm{A}) \leq \max(r_i(\bm{A}))$ \cite{Garren1968} to \eqref{eq.stability}, the stability condition holds if: 
\begin{align}
    \label{eq.stability}
    \max\limits_i \left( r_i\left( \left[\bm{C}_p^2 + \bm{C}_q^2\right] \diag^{-1} (\bm{v}^{*2}) \bm{Z} \bm{Z}^T \right)\right) < 2
\end{align}
Considering that $\left[\bm{C}_p^2 + \bm{C}_q^2\right] \diag^{-1} (\bm{v}^{*2})$ are diagonal matrices, \textcolor{blue}{\eqref{eq:final_condition} can be readily obtained}.
\end{proof}
The criterion derived in \eqref{eq.stability} links voltage stability with all inverters' piece-wise linear droop control values \textcolor{blue}{and network parameters. Unlike other criteria} found in the literature such as \cite{Singhal2019,bakerNetwork2017}, it also ties the condition to the solution of the \DistFlow equations at the fixed point for inverter operations $\bm{v}^{*2}$. The condition in \eqref{eq:final_condition} also emphasizes that when $\bm{v}^{*2}$ is relatively small the stability bound becomes harder to satisfy.

\textcolor{blue}{Numerical observation shows that the sufficient condition for network stability is quite conservative.  However, the benefit of the simple expression in \eqref{eq:final_condition} is that set-points for inverters can be chosen using a local policy and local voltage information.} More specifically, when an inverter measures rapid fluctuations in voltage amplitude, it can react by bringing constants $\bm{C}_p$ and $\bm{C}_q$ below the bound established by \eqref{eq:final_condition} to restore stability. To do so, values of the coefficients $r_i({\bm Z \bm Z})^T$ should be known a priori but $v_{i}^{*}$ may not be known. However, a conservative value of $v_{i}^{*}$ to meet the stability condition at the time step of implementing the policy can be calculated using the voltage of the previous time step or a moving average of voltages calculated according to the time delay value as $v_{i}^{*}$. Considering the smart inverter control curve modeling presented in IEEE 1547 standard, \textcolor{blue}{the following analysis illustrates} how to change $C_{p,i}$ and $C_{q,i}$ for the $i^{\textnormal{th}}$ inverter in response to oscillations.

Using the result in Lemma \ref{lemma.Lipschitz} and relaxing the strict inequality by introducing a stability margin $\epsilon \geq \epsilon_0$, where $\epsilon_0>0$ is a desired lower bound on the stability margin, \textcolor{blue}{\eqref{eq:final_condition} can be written as}:
\begin{align}
    \label{eq:modified_final_condition}
    \left(
    \frac{\overline{p_i}}{\epsilon_{p,i}} \right)^2 + 
    \left(\frac{\overline{p_i}}{\epsilon_{p,i}
    \sqrt{\frac{1}{\mu^2}-1}}+ \frac{q_i^{lim}}{\epsilon_{q,i}^{+}}\right)^2  &\leq \eta_i - \epsilon 
\end{align}
where:
\begin{align}
    \label{eq:eta_i_defn}
    \eta_i = \dfrac{2~v_{i}^{*2}}{r_i(\bm{Z} \bm{Z}^T)}
\end{align}
By using: 
\begin{equation}
    \label{eq:reverse_variable}
    x_i= 1/\epsilon_{p,i}^{'},~~y_i=
   1/\epsilon_{q,i}^{+'}
\end{equation}
\eqref{eq:modified_final_condition} can be rearranged as:
\begin{equation}
    \label{eq:xMx}
    [x_i, y_i]{\bm M}_i[x_i, y_i]^T \leq \eta_i - \epsilon 
\end{equation}
where the entries of the matrix ${\bm M}_i$ are:
\begin{align*}
[{\bm M}_i]_{11}&=\dfrac{\overline{p_i}^2}{1-\mu^2};~~~
[{\bm M}_i]_{22}=(q_i^{lim})^2\\
[{\bm M}_i]_{12}&=[{\bm M}_i]_{21}=
\dfrac{\overline{p_i} q_i^{lim} }{\sqrt{1/\mu^2-1}}
\end{align*}
Now the minimization of $\epsilon_p$ allows an inverter to maintain the largest amount of active power injection. Hence, \textcolor{blue}{the following optimization problem can be formulated} : 
\begin{align}
    \max x_i~~\textnormal{subj. to}~ \eqref{eq:xMx},~x_i>0, y_i>1/(2V_q^+)
\end{align}
Given the various positivity constraints and the bound in $\epsilon_{q,i}^{+'}$, the following is true: 
\begin{equation}
    \left(
    \frac{\overline{p_i}}{\epsilon_{p,i}'} \right)^2 + 
    \left(\frac{\overline{p_i}}{\epsilon_{p,i}'
    \sqrt{\frac{1}{\mu^2}-1}}+ \frac{q_i^{lim}}{\epsilon_{q,i}'}\right)^2 
<\left(\dfrac{\overline{p_i}}{\epsilon_{p,i}' \sqrt{1-\mu^2}}\right)^2    
\end{equation}
where the right-hand side is \textcolor{blue}{simply the left-hand side with the term $\dfrac{q_i^{lim}}{\epsilon_{q,i}'}=0$. This forms an upper bound} because $0<\epsilon_{q,i}^{+'}\leq 2V_q^+$ is impossible. \textcolor{blue}{Therefore, $x_i$ can be chosen such that the upper-bound matches $(\eta_i - \epsilon)$. This results in choosing} $\epsilon_{p,i}'$ following \eqref{eq:epsilonp_prime} and setting $\epsilon_{q,i}^{+'}$ to match its upper limit $2V_q^+$. This shrinks the dead-band region in the Volt-Var characteristic and results in choosing $\epsilon_{p,i}'$ through \eqref{eq:epsilonp_prime}:
\begin{equation}
    \label{eq:epsilonp_prime}
   \epsilon_{p,i}'=\frac{\overline{p_i}}{\sqrt{(1-\mu^2)(\eta_i - \epsilon)}}
\end{equation}
The definition of $\eta_i$ shown in \eqref{eq:eta_i_defn} depicts that $\eta$ for a node decreases as the electrical distance increases of that node from the substation (due to increase in $r_i(\bm{Z} \bm{Z}^T)$ and less variation in $v_i^*$ compared to $r_i(\bm{Z} \bm{Z}^T)$). This suggests that the inverter furthest from the substation will have the largest $\epsilon_{p,i}^{'}$. So for multiple equal-sized inverters placed across the entire distribution network, the inverter furthest from the substation can inject more real power than others, thus providing voltage support while satisfying the stability criterion. \textcolor{blue}{Moreover, \eqref{eq:epsilonp_prime} illustrates that the local control policy for individual inverter relies on its capacity and operational parameters. There is no other independent variable that can affect the calculation of  $\epsilon_{p}'$ when an inverter experiences voltage oscillation at its terminal. Integrating the operating condition while deriving \lyapnouv analysis provides a conservative estimate to ensure that the control policy can stabilize oscillations under a given operating condition, thus ensuring the robustness of the proposed methodology. }

\section{Simulation Results}
\label{simulation_results}

\textcolor{blue}{Simulations were completed on the 85 bus radial MATPOWER test case \cite{matpower} with maximum active and reactive load being 2.571 MW and 2.622 MVar, respectively. The test case network was used to evaluate three cases as shown in Table \ref{table:cases}. All simulations were completed on a Intel(R) Xeon(R) CPU E5-$1630$ v3 $3.70$ GHz computer.}

\begin{table}[htbp]
\processtable{\textcolor{blue}{List of Use Cases} \label{table:cases}}
{\begin{tabular*}{20pc}{@{\extracolsep{\fill}}lll@{}}\toprule
\textcolor{blue}{Case 1} & \textcolor{blue}{No inverters (considered as the base case)}\\ 
\textcolor{blue}{Case 2} & \textcolor{blue}{Inverters without stabilization policy}\\ 
\textcolor{blue}{Case 3} & \textcolor{blue}{Inverters with stabilization policy}\\ 
\botrule
\end{tabular*}}{}
\end{table}

Smart inverters were randomly placed at 5 of nodes, highlighted in Fig. \ref{fig:85Bus}, with a total installed capacity of 1.05 MW. Each inverter has combined Volt-Var and Volt-Watt capability (Volt-Watt preference). For the simulation, each load and inverter was assigned load profiles and generation profiles, respectively, using data from \href{https://dataport.pecanstreet.org/}{Dataport} \cite{Helou2020}.

The profiles from Dataport used in this work are at $10$ minute resolution for one day. Generation profiles include real power generation data from multiple solar inverters. Load profiles are based on household real power consumption data. For this work, spline interpolation was conducted to create $1$ second load and generation profiles from the original data resulting in $86400$ data points in seconds. Simulations here use data taken from load and generation profiles between $41000-44600$ seconds, one hour over midday, to demonstrate algorithm operation.

The output of the spline interpolation process is based on the size of actual inverters and actual household demand recorded by Dataport. Hence the data is normalized to use with the rated capacities of the inverters and the loads in the 85 bus network. The individual generation profile is used to calculate the rated apparent power $s_i$ for the $i^{th}$ inverter at each time step of the simulation. Figure \ref{fig:profile} shows the average normalized load profile (averaged across 85 nodes) and average normalized generation profile (averaged across 5 nodes) spread across the $3600$ second simulation period (between $41000-44600$ seconds in the day). \textcolor{blue}{The graphs show data that were averaged after the normalization process, and therefore, the peak value is not $1.0$ as would be expected in a graph showing normalized profiles of each individual inverter. For cases 2 and 3, the choice of parameters is as follows:}
\begin{itemize}
    \item Time delay $T_d^i$ for the $i^{th}$ inverter considered is the minimum of the low-pass filter constants for real power $T_p^i$ and reactive power $T_q^i$; $T_d^i = \min(T_p^i,T_q^i)$. 
    \item \textcolor{blue}{The $i^{th}$ inverter executes it's control policy when voltage flicker measured by \eqref{eq:voltage_flicker} is greater than $v_i^T\textnormal{pu}$}. 
    \begin{equation}
        \label{eq:voltage_flicker}
        \textnormal{voltage flicker} (t_0,i) = \dfrac{\sum_{t=t_0-T_d^i+1}^{t=t_0} \abs{v_{t,i} - v_{t-1,i}}}{T_d^i}
    \end{equation}
    \item The $i^{th}$ inverter applies its local control policy at every $T_d^i~s$ based on the measured $\textnormal{\textit{voltage flicker}} (t,i)$  at time step $t$ following \eqref{eq:voltage_flicker}.
\end{itemize} 
\textcolor{blue}{Table \ref{tab:parameters} provides the parameters used for initializing the inverters in cases 2 and 3.}
%
\begin{table}[!h]
\processtable{\textcolor{blue}{Parameter Values for Cases 2 and 3}\label{tab:parameters}}
{\begin{tabular*}{20pc}{@{\extracolsep{\fill}}lll@{}}
\toprule
\textcolor{blue}{$V_p$} & \textcolor{blue}{$1.035$ pu} \\ 
\textcolor{blue}{$\epsilon_p$} & \textcolor{blue}{$0.03$ pu}\\ 
\textcolor{blue}{$V_q^{+}$} & \textcolor{blue}{$1.035$ pu} \\ 
\textcolor{blue}{$V_q^{-}$} & \textcolor{blue}{$0.965$ pu}\\ 
\textcolor{blue}{$\epsilon_q^{+}$} & \textcolor{blue}{$0.03$ pu}\\ 
\textcolor{blue}{$\epsilon_q^{-}$} & \textcolor{blue}{$0.03$ pu}\\ 
\textcolor{blue}{$\epsilon$} & \textcolor{blue}{$0.000001$} \\
\textcolor{blue}{$v^T$} & \textcolor{blue}{$0.01$ pu} \\
\textcolor{blue}{Minimum power factor} & \textcolor{blue}{$0.2$} \\
\botrule
\end{tabular*}}{}
\end{table}

The local control policy is now demonstrated under a period of time with solar generation intermittency and then a cyber-attack scenario. \textcolor{blue}{Figure \ref{fig:flowchart} shows the local control algorithm process implemented by each individual inverter.}
\begin{figure}[!t]
	\centering
	\includegraphics[width=1\columnwidth]{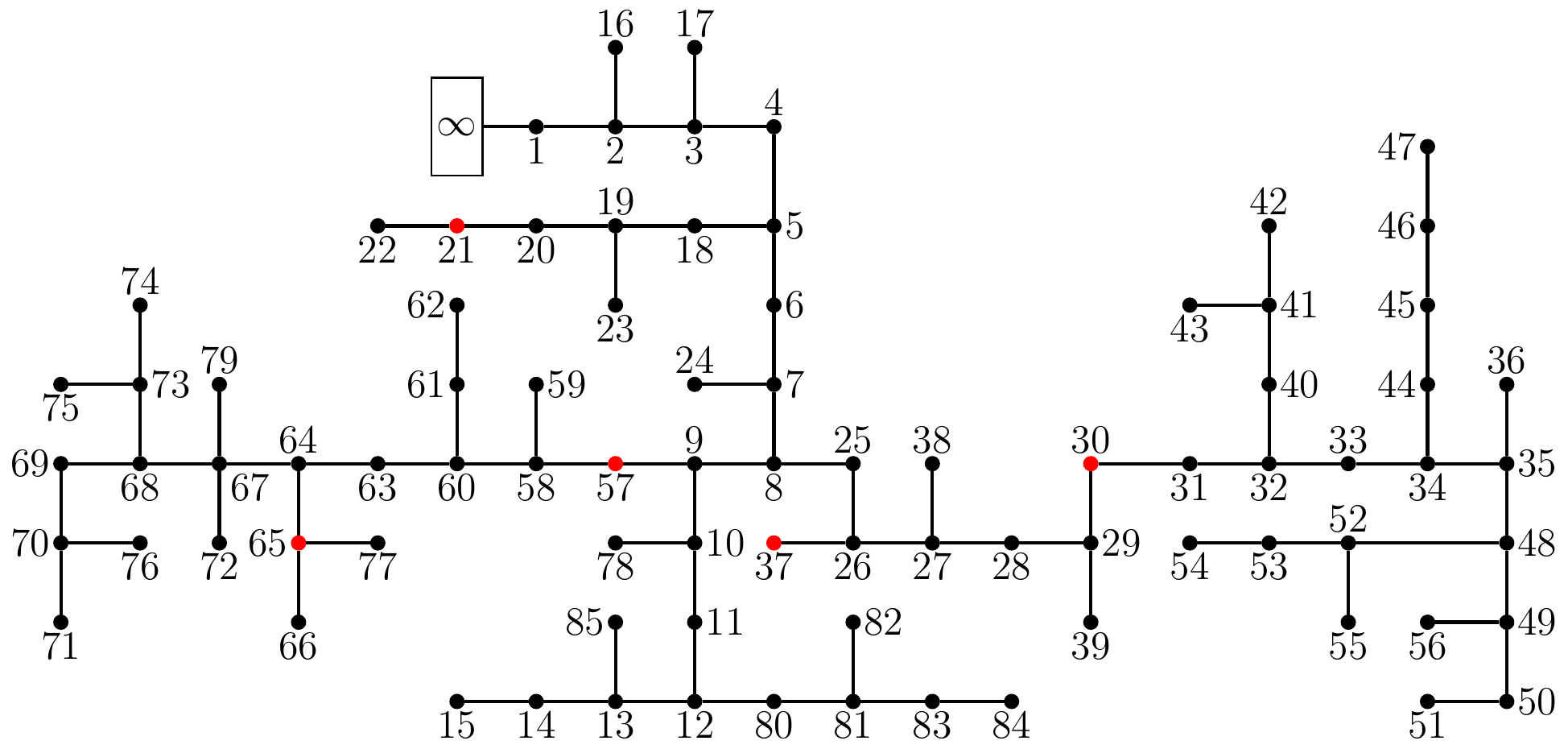}
	\caption{85 bus radial MATPOWER test case with \textcolor{red}{red} denoting location of inverters}
	\label{fig:85Bus}
\end{figure}
\begin{figure}[]
	\centering
	\includegraphics[width=1\columnwidth,height=0.25\textheight]{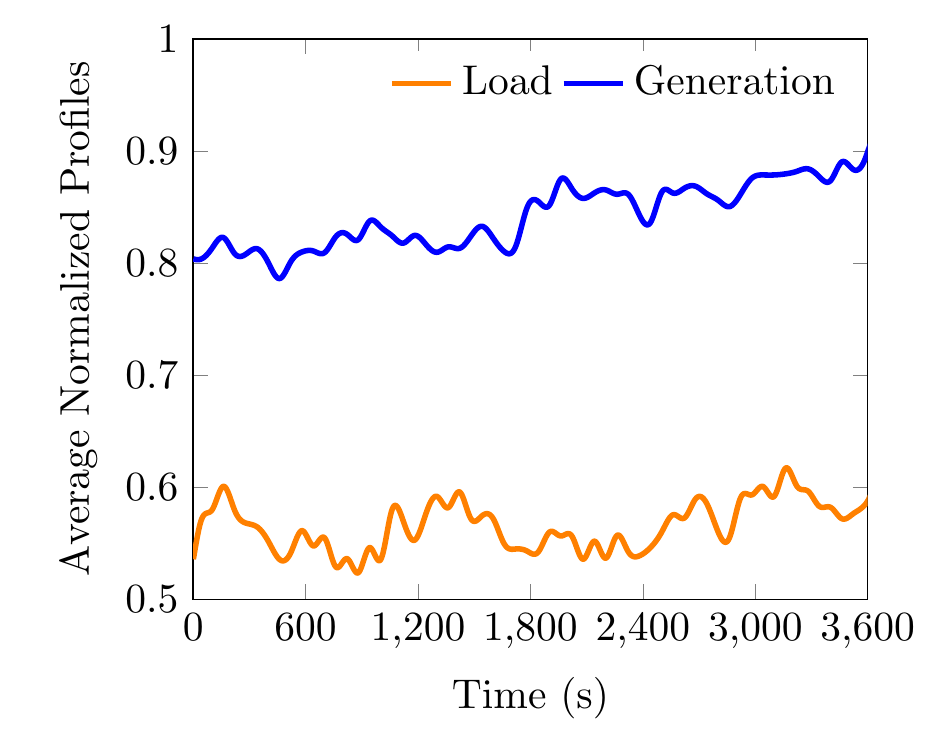}
	\caption{Normalized average load and generation profile}
	\label{fig:profile}
\end{figure}
\begin{figure}[htbp]
	\centering
	\includegraphics[scale=0.75]{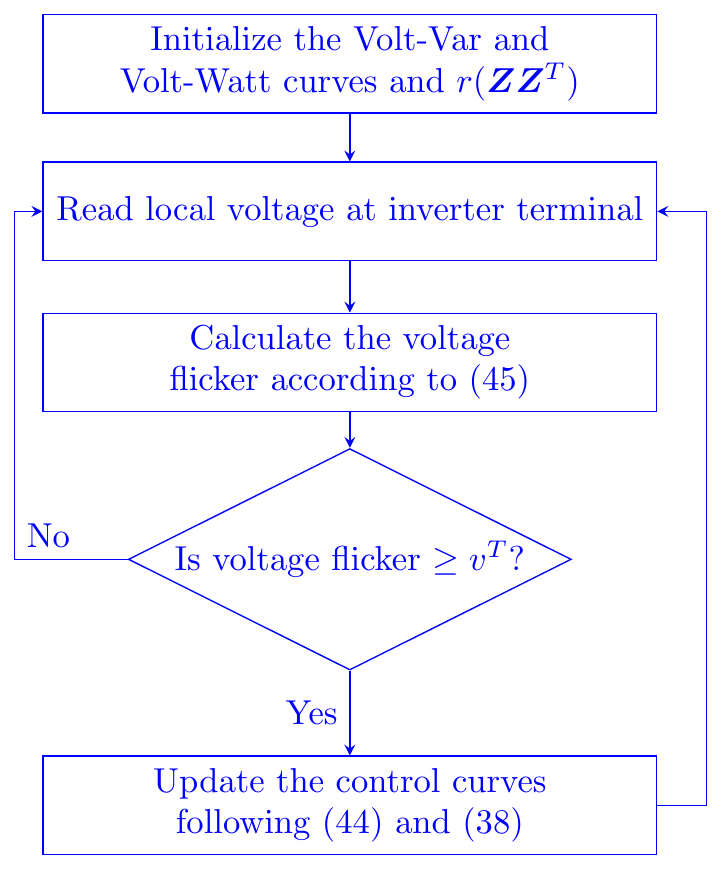}
	\caption{\textcolor{blue}{Flowchart showing the local control algorithm functions implemented by each individual inverter}}
	\label{fig:flowchart}
\end{figure}

\begin{figure*}[]
    \centering
    \includegraphics[width=1\linewidth,height = 0.35\textheight]{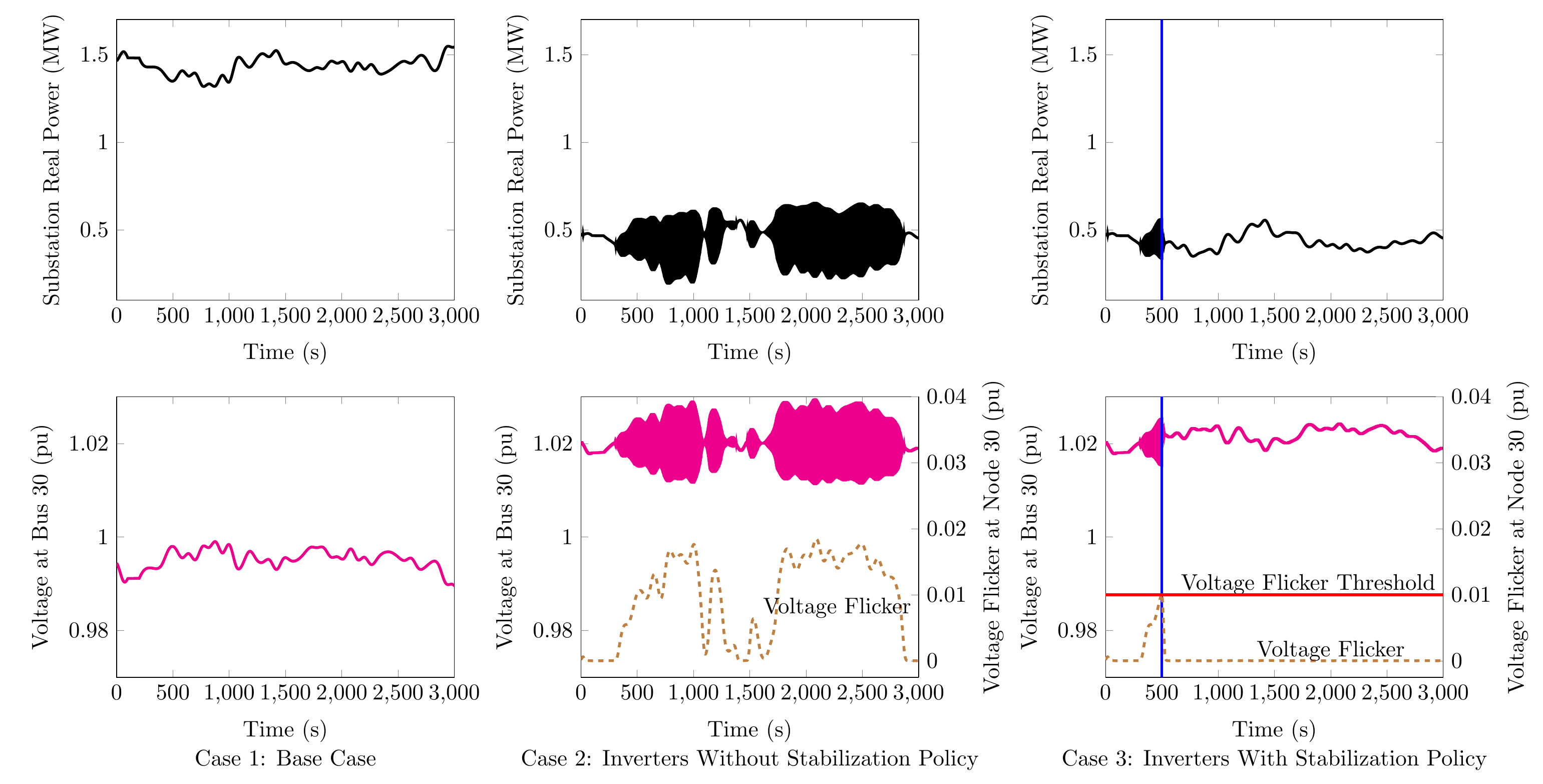}
    \caption{Substation real power (top) and Bus $30$ voltage (bottom) under three simulated cases}
    \label{fig:SubstationPlot}
\end{figure*}
\begin{figure*}[]
	\centering
	\includegraphics[width=1\textwidth,height = 0.19\textheight]{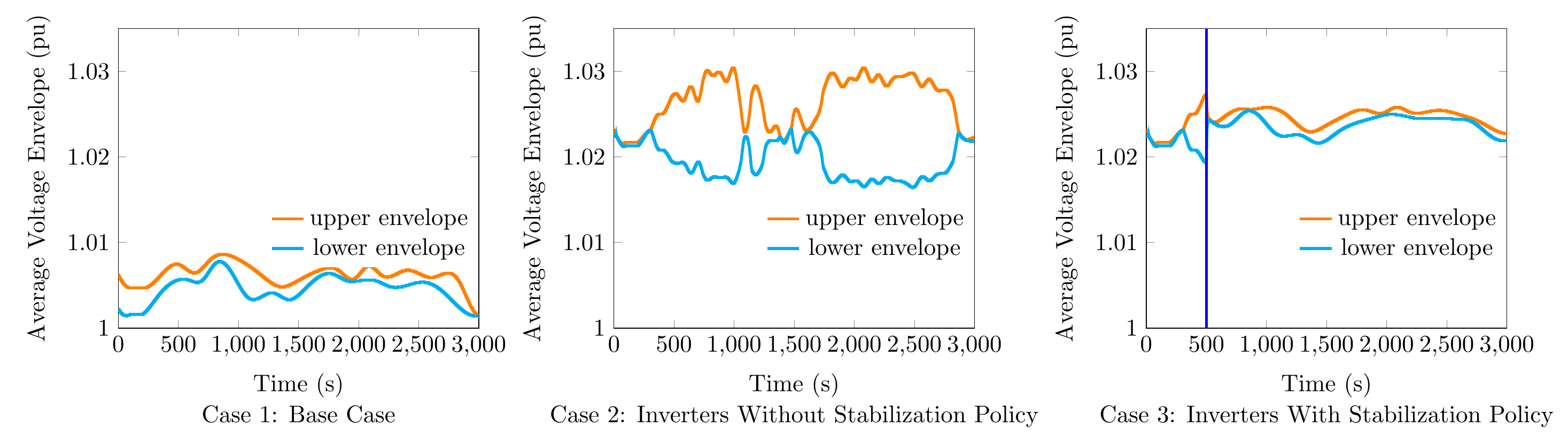}
	\caption{Average voltage envelope for all nodes under three simulated cases}
	\label{fig:voltageenvelope}
\end{figure*}
\begin{figure*}[]
	\centering
	\includegraphics[width=1\textwidth,keepaspectratio]{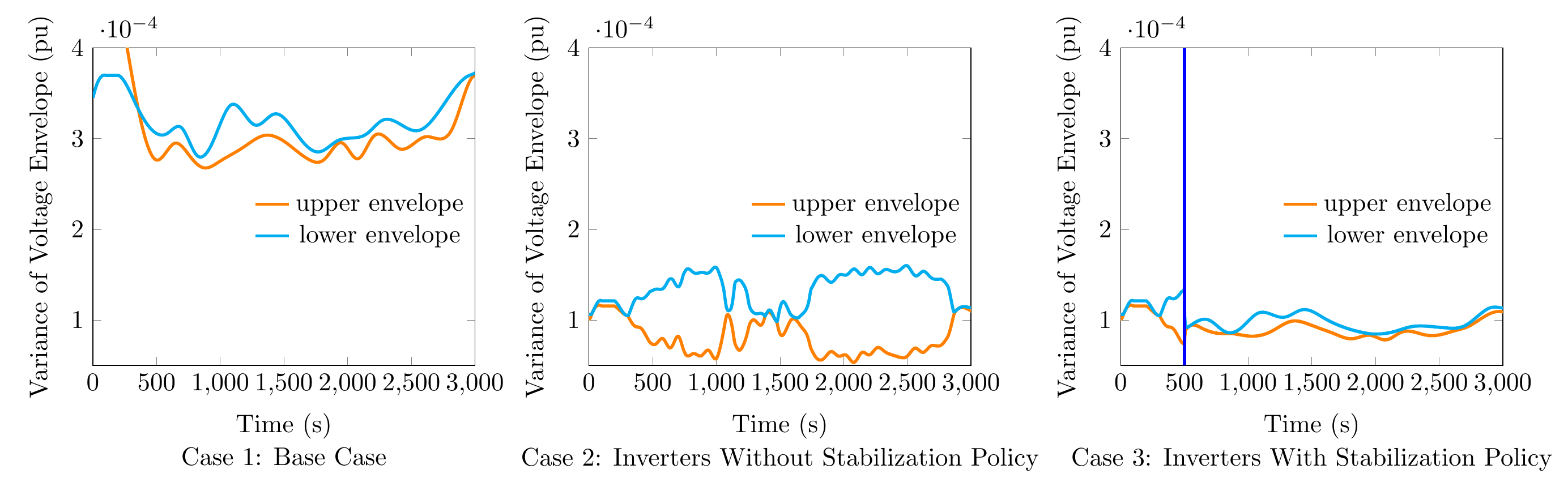}
	\caption{Variance of voltage envelope for all nodes under three simulated cases}
	\label{fig:varenvelope}
\end{figure*}
\subsection{Damping Oscillation From Generation Intermittency}
Simulations were completed with a 1 second time step for each of the three cases and the local policy is employed for every inverter. The time delay ($T_d^i$) for each inverter was chosen to be $25s$ arbitrarily. The arbitrary choice is justified by considering that the stability criterion in \eqref{eq:final_condition} does not include $\bm{T}_p$ and $\bm{T}_q$ and hence the time delay for each inverter dictates the time interval for successive implementation of the policy. The choice of an optimal time delay for inverters will be both network-specific and inverter limited, and such considerations are out of the scope of this work that demonstrates the fundamental behavior of the policy. The inverter at bus $30$ is selected to illustrate the effect of the stabilization policy. 

\begin{figure*}[!t]
    \centering
    \includegraphics[width=1\linewidth,height = 0.35\textheight]{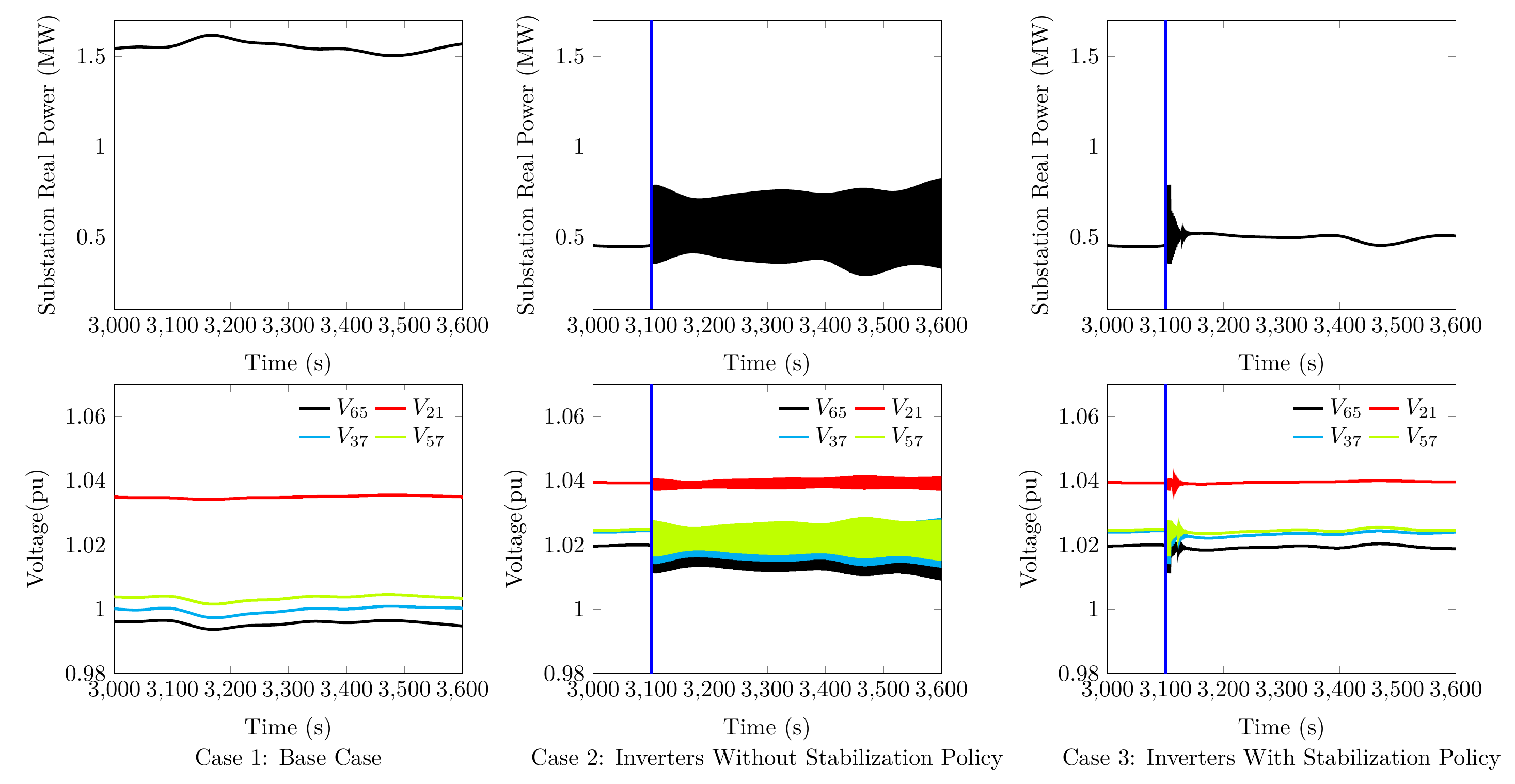}
    \caption{Substation real power (top) and unattacked inverters' bus voltages (bottom) under three simulated cases during a cyber-attack scenario}
    \label{fig:cyber_attack_plot}
\end{figure*}

Figure \ref{fig:SubstationPlot} shows the resulting voltage for the inverter at bus $30$ and real power flow at the substation. Oscillations in real power and voltage are significant when inverters are not equipped with the stabilization policy. When the stabilization policy is active and engages according to the criteria shown in \eqref{eq:voltage_flicker}, the inverter at bus $30$ adjusts its droop curve to mitigate voltage oscillation using only local voltage flicker measurements and without any centralized control or information sharing with others. Note that, though all the inverters are equipped with the policy, the flicker condition is only satisfied for the inverter at bus $30$ for this specific load and generation profile. If a tighter limit was chosen ($<0.01 ~\textnormal{pu}$), the other inverters may have been triggered to adjust their droop curves to mitigate the oscillation, or it could be that only some of the inverters are called upon to mitigate the oscillation, as shown in this example. The choice of the appropriate voltage flicker threshold can be evaluated through simulations of specific network configurations, inverter types, and network states prior to physical implementation. Case 3 of Figure \ref{fig:SubstationPlot} shows the voltage at bus $30$ and the stabilization policy activating when voltage flicker exceeds the specified threshold at $t=500s$. Oscillations are rapidly dampened and the bus voltage stabilizes. Results in Figure \ref{fig:SubstationPlot} also show that oscillations cease in substation real power flow. 

\textcolor{blue}{To illustrate that the developed policy can stabilize voltage oscillation in the network overall, the upper and lower envelope of voltages calculated across all nodes are shown in Figure \ref{fig:voltageenvelope} and Figure \ref{fig:varenvelope}. This simplifies results for graphical display rather than showing voltages for all 85 nodes.} Then the upper envelopes and lower envelopes are averaged for all the $85$ nodes for all the simulated cases and shown in Figure \ref{fig:voltageenvelope}. Variance for the upper envelope and lower envelopes for voltages across all nodes for all the cases is shown in Figure \ref{fig:varenvelope}. \textcolor{blue}{It is evident from Figure \ref{fig:voltageenvelope} that the difference between the upper and lower envelopes of inverter voltages is much lower for case 3 than for case 2. A similar trend is found for the calculated voltage variances as shown in Figure \ref{fig:varenvelope}.} These two figures provide further insight into network-level behaviors by showing that, in the scenario for inverters with a stabilization policy, the average network voltage profile closely follows the shape and variance of the scenario without inverters, whereas greater deviations exist in the scenario of inverters without a stabilization policy. This demonstrates that the proposed policy was effective in mitigating voltage instability using only local voltage information.

\subsection{Damping Oscillation From Cyber-Attack}
To illustrate that the developed policy can mitigate instability created from a cyber-attack, a scenario was generated with heterogeneous time delays for the inverters. The time delays were sampled from $\mathcal{N}(5s,10s)$ to acknowledge that inverters from different manufacturers can have different time delays. The simulation time window was selected during a period of the day with low generation intermittency that was insufficient to trigger the policy by voltage flicker. The severity of a cyber-attack is affected by which inverters are attacked, how inverter parameters are changed, and the operating state of the network. For this demonstration, the inverter at bus $30$ was chosen because bus $30$ is electrically the furthest from the substation among the buses with an inverter. This results in a higher value of $r_i(\bm{Z}\bm{Z}^T)$, hence making the bound shown in \eqref{eq:final_condition} the easiest to violate from an attacker's perspective. At $t=3100s$, a cyber-attack modifies $V_p$ and $\epsilon_p,$ of the inverter at bus $30$ to $1.02$ and $0.2$, respectively. The inverter also loses its capability to apply the local control policy while under attack. Set-point values for the inverter were selected to increase the slope of the Volt-Watt curve and push $C_p$ into violation of the stability criterion \textcolor{blue}{derived in \eqref{eq:final_condition}}. Case 2 of Figure \ref{fig:cyber_attack_plot} shows how a cyber-attack on a single inverter can disrupt bus voltages for the other inverters if they are not equipped with the stability policy. 

Case 3 shows results for inverters with the stability policy enabled. The four inverters enable the policy according to their local measurements and individual time delays and begin to dampen oscillations and finally achieve stability after approximately 25 seconds. The benefit of the policy is also shown to reduce swings and finally stabilize substation real power flow. 

\section{Discussion}
\label{sec:discussion}
This work developed and simulated an approach to enable voltage stabilization in a distribution network by updating solar PV inverter set-points in real-time \textcolor{blue}{to counteract voltage oscillation}. Piece-wise linear models of both Volt-Var and Volt-Watt functions were represented using \lipschitz functions. \textcolor{blue}{\lipschitz constants for both Volt-Var and Volt-Watt control functions were derived while respecting smart inverter hardware limitations for reactive power generation. \lyapnouv analysis was used to derive a sufficient condition to ensure network-wide voltage stability. Using the condition, a local control policy to adjust the droop constants of control curves was derived that used only local information.} Simulation results completed for a radial distribution test case network were demonstrated to show how inverters can adapt their control curves to voltage oscillations caused by solar intermittency and cyber-attacks.

\textcolor{blue}{Implementation of the proposed algorithm in a real distribution network relies on the validity of the assumptions mentioned in Section \ref{sec:inverter_modeling}. The assumptions hold in any practical application due to: 1) Monotonically increasing functions for an inverter increase real power output when the voltage is beyond the threshold voltage and absorbs or injects reactive power when the voltage is below or above nominal voltage, respectively, and 2) An unbounded derivative for control curves requires a step change rather than a ramp change in smart inverter real or reactive power production, which violates physical constraints of the inverter. In thinking of implementing the local control policy on physical inverters, these controls can be integrated directly by inverter vendors upon manufacture, or a legacy inverter can be equipped with a secondary controller (e.g., Raspberry Pi) that can implement the control curves and update operating set-points.}

Future work will explore how this approach can counteract voltage instabilities for networks including the smart inverter capabilities introduced here and other voltage regulating devices such as voltage regulators and capacitor banks. \textcolor{blue}{Future work will also evaluate the derived policy in a power hardware-in-the-loop simulation including a real-time digital simulator, grid emulator, programmable load bank, and smart inverter(s).}
\section{Acknowledgments}
\label{sec11}
This research was supported in part by the Director, Cybersecurity, Energy Security, 
and Emergency Response, Cybersecurity for Energy Delivery Systems program of the U.S. Department of Energy via the Cybersecurity via Inverter-Grid Automatic Reconfiguration (CIGAR) project under contract DE-AC02-05CH11231, the United States Office of Naval Research under the Defense University Research-to-Adoption (DURA) Initiative with Award Number N00014-18-1-2393, and the Engineering Research Center Program of the National Science Foundation and the Office of Energy Efficiency and Renewable Energy of the Department of Energy under NSF Cooperative Agreement Number EEC-1041895. Any opinions, findings and conclusions or recommendations expressed in this material are those of the authors and do not necessarily reflect those of the Office of Naval Research, National Science Foundation or Department of Energy.
\bibliographystyle{IEEEtran}
\bibliography{Bib_InverterJournal}
\end{document}